\title{Checking the strict positivity of Kraus maps is NP-hard}
\author{St\'ephane Gaubert}
\address{INRIA and CMAP, \'Ecole Polytechnique. 91128 Palaiseau Cedex}
\email{Stephane.Gaubert@inria.fr}
\author{Zheng Qu}
\address{{School of Mathematics, University of Edinburgh, Edinburgh, EH9 3JZ, UK}}
\email{{zheng.qu@ed.ac.uk}}
\thanks{This work was partially supported by the PGMO (Gaspard Monge) Program of FMJH (Fondation Math\'ematique Jacques Hadamard) and EDF. It was carried out when the second author was with CMAP, Ecole Polytechnique and INRIA, being supported by a doctoral fellowship of Ecole Polytechnique.}
\begin{document}
\maketitle
%\doublespacing
\begin{abstract}
Basic properties in Perron-Frobenius theory are strict positivity, primitivity,
and irreducibility. Whereas for nonnegative matrices,
these properties are equivalent to elementary graph properties
which can be checked in polynomial time, we show that for Kraus maps
- the noncommutative generalization of stochastic matrices - 
checking strict positivity (whether the map sends the cone to its interior)
is NP-hard. The proof proceeds
by reducing to the latter problem the existence 
of a non-zero solution of a special system of bilinear equations. The complexity
of irreducibility and primitivity is also discussed in the noncommutative setting.

\smallskip\noindent\textsc{Keywords.} 
Perron-Frobenius theory, multilinear algebra, computational complexity, positive dynamical systems, noncommutative Markov chains, noncommutative consensus, completely positive maps, quantum control and information theory.
\end{abstract}
\section{Introduction}
Irreducibility, primitivity, and strict positivity
are basic structural notions of Perron-Frobenius theory~\cite{BermanPlemmons}.
Recall that 
a linear map $A$ leaving invariant a 
(closed, convex, and pointed) cone $C$ of a vector space
is said to be strictly {\em positive} if it sends the cone to its interior; {\em primitive} if it has a power that is positive, and {\em irreducible} if it does not leave invariant a non-trivial face of the cone. 
These notions allow one
to determine spectral or dynamical properties of the map.
In particular, the strongest of the above notions, strict positivity,
entails the strict contraction of $A$ with respect to Hilbert's projective
metric (Birkhoff's theorem), and so, the convergence of the rescaled iterates
of $A$ %converge
 to a rank one linear map with a geometric rate.
The latter property is of importance
in a number of applications, including ``consensus theory''
for distributed systems or population dynamics. It is natural to ask how properties of this nature can be checked for various classes of cones.

If $C$ is the
standard positive cone of $\mathbb{R}^n$,  $A$ can be identified
to a nonnegative matrix $A\in M_n(\R)$. Then, strict positivity, primitivity,
and irreducibility, can be easily checked. 
Indeed, a nonnegative
matrix $A$ is strictly positive if and only if all its entries are positive.
Moreover, $A$ is
primitive if and only if $A^{n^2-2n+2}$ is strictly positive~\cite{HornMatrix}. 
Finally, it is irreducible if and only if the associated directed graph is strongly connected. 
%It is positive if and only if all the entries are positive. Finally, it is primitive if and only if $A^{n^2-2n+2}$ is positive~\cite{HornMatrix}.
Note also that an efficient combinatorial algorithm is available to compute the period of an irreducible matrix, which allows one in particular to decide if it is primitive~\cite{denardo}. Therefore, primitivity and irreducibility
for nonnegative matrices 
are equivalent to well known problems of graph theory,
that can be solved in polynomial time.

%In particular, checking the positivity of a stochastic matrix is quadratic in the dimension of $A$.

Another important class of maps arises when considering the
cone $C$ of positive semidefinite matrices. 
Then, the noncommutative analogue of a stochastic matrix is a Kraus map, i.e., a completely positive and trace-preserving map on this cone.
% space of positive semidefinite matrices. 
 Kraus maps are fundamental objects in quantum control and information theory, as they represent quantum channels.
The notions of irreducibility, strict positivity and primitivity 
are of importance for Kraus maps, see in particular~\cite{Farenick96,Sanz2012,sepulchre,ReebWolf2011}. It is natural to ask whether we can verify these properties for Kraus maps in polynomial time, as in the case of nonnegative matrices.

Our main result, Theorem~\ref{theo-decicompo}, asserts that checking the strict positivity of a Kraus map is NP-hard. It may come as a surprise that strict positivity,
which is the simplest property in the case of nonnegative matrices,
turns out to be the hardest one in the case of Kraus maps.
Indeed, we derive 
from previous results that the irreducibility and primitivity of a Kraus
map can be checked in polynomial time.
A classical lemma of Burnside on matrix algebras combined
with a result of Farenick~\cite{Farenick96} implies that the irreducibility
of a completely positive map can be checked in polynomial time. Moreover,
a characterization given by Sanz, P\'erez-Garc\'ia, Wolf and Cirac~\cite{Sanz2012} also implies
that the primitivity of a Kraus map can be checked in polynomial time. See Corollary~\ref{pr-poly} below for the derivation of these two facts. 
Note that in each of these results, we assume that the input -which determines the Kraus map- consists of the Kraus operators. 

%We first point out in Theorem~\ref{pr-poly} that the irreducibility and primitivity can be checked in polynomial time. The proof relies on Burnside lemma from matrix algebras and on 
%%heavily on an equivalent property of irreducibility in Proposition~\ref{th-irr} and 
%a 

To show Theorem~\ref{theo-decicompo}, we first show that  the strict positivity of a Kraus map is equivalent to the non feasibility of the bilinear system given by the Kraus operators, or  equivalently the non-existence of a rank one matrix in the orthogonal complement of the subspace generated by the Kraus operators, see Lemma~\ref{l-stp}. Then, we prove that every 3SAT problem can be reduced in polynomial time  to the problem of checking the feasibility of a bilinear system given by a set of Kraus operators, see Theorem~\ref{th-3SATtorandkone}.
% One byproduct of our approach is the NP-hardness of checking the existence of solution to a bilinear system, or equivalently, the existence of a rank one matrix in a linear subspace of matrices, see Theorem~\ref{th-biliNPH}. 

We note that several rank minimization problems have been extensively studied in the literature~\cite{Fazel04rankminimization,RechtXuHassibi,Faugere13}. In particular, the problem
of finding a matrix of minimal rank in a affine subspace is known to be NP-hard~\cite{Buss99,Recht,Donoho} and hard to approximate~\cite{Natarajan}. However, here the matrix subspace is linear instead of affine,
and rank minimization in a linear subspace is a trivial
subproblem. 
Note also that Hillar and Lim~\cite{HilarLim} showed the NP-hardness of the bilinear feasibility problem, by reducing the graph 3-Colorability problem to it. However, the bilinear systems arising from a Kraus map are special due to the
unital constraint or trace-preserving property of the Kraus map. Hence Theorem~\ref{th-3SATtorandkone} is a different result; it does not seem easy to deduce it from the NP-hardness of checking the feasibility of bilinear systems, see Remark~\ref{rem-HillarLim}.

\section{Irreducibility, primitivity and strict positivity for completely positive maps}\label{sec-irr}
Throughout the paper, the space of Hermitian matrices is denoted by $\sym_n$. Denote by $\preceq$ ($\prec$) the (strict) Loewner order on the space $\sym_n$, i.e., $A\preceq B$ ($A\prec B$) if and only if $B-A$ is a positive semidefinite (definite) matrix. The adjoint matrix (conjugate transpose) of a matrix $A\in\cC^{n\times n}$ is denoted by $A^*$. 

To a family of $n\times n$ complex matrices $V_1,\dots,V_m$, we
associate the {\em completely positive map}
$\Psi:\sym_n\rightarrow \sym_n$,
\begin{align}\label{a-PhiX}
\Psi(X):=\sum_{i=1}^m V_iXV_i^*,\quad  X\in \sym_n\enspace.
\end{align}
This map is said to be a {\em Kraus map}  if
\begin{align}\label{a-ViViIn}
\sum_{i=1}^m V_i^ *V_i=I_n\enspace ,
\end{align}
then, the matrices $V_1,\dots,V_m$ are called \firstdef{Kraus operators}.

%such that:
%The matrices $\{V_1,\dots,V_m\}$ are called \firstdef{Kraus operators}.
%we first recall the definition of irreducibility and primitivity for the completely positive unital map $\Phi$. Then we show that the global convergence of system~\eqref{a-sysnonc} is equivalent to the primitivity of $\Phi$ if $\Phi$ is irreducible.

We denote by $\cS_k(V_1,\dots,V_m)$ the complex linear space spanned by all the products of $k$ Kraus operators $\{V_1,\dots,V_m\}$:
$$
\cS_k(V_1,\dots,V_m):=\Span\{V_{i_k}\dots V_{i_1}:i_k,\dots,i_1\in\{1,\dots,m\}\}\enspace.
$$
We also denote by $\D_k(V_1,\dots,V_m)$ the complex linear space spanned by all the products of at most $k$ Kraus operators:
$$
\D_k(V_1,\dots,V_m):=\Span\{V_{i_j}\dots V_{i_1}:1\leq j\leq k, i_j,\dots,i_1\in\{1,\dots,m\}\}\enspace.
$$
We denote by $\mathcal{A}(V_1,\dots,V_m)= \cup_{k\geq 1} \D_k(V_1,\dots,V_m)$ 
the algebra generated by the Kraus operators $\{V_1,\dots,V_m\}$:
$$
\mathcal{A}(V_1,\dots,V_m):=\Span\{V_{i_k}\dots V_{i_1}: k\in \bN, i_k,\dots,i_1\in\{1,\dots,m\}\}\enspace.
$$
%The Choi matrix~\cite{Choi} associated to $\Phi$, denoted by $\mathbf C(\Phi)$, is a positive semidefinite Hermitian matrix in $\M_n(\mathbb C)\otimes \M_n(\mathbf C)$ such that the $(i,j)$ block matrix is:\begin{align}\mathbf C(\Phi)_{i,j}=\Phi(E_{i,j})\end{align}where $E_{i,j}\in \M_n(\mathbb C)$ is the matrix with 1 at the $(i,j)$ component and zeros elsewhere.
\begin{lemma}\label{l-pAPhiD}There is $p\leq n^2$ such that $\mathcal{A}(V_1,\dots,V_m)=\D_{p}(V_1,\dots,V_m)$.
\end{lemma}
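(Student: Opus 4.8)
The plan is to regard the subspaces $\D_1(V_1,\dots,V_m)\subseteq \D_2(V_1,\dots,V_m)\subseteq\cdots$ as an ascending chain inside the $n^2$-dimensional space $\cC^{n\times n}$, and to show that this chain stabilizes after at most $n^2$ steps. The whole argument is then a dimension count combined with one structural observation about how equality propagates up the chain.

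First I will record the recursive relation
\[
\D_{k+1}(V_1,\dots,V_m)=\D_k(V_1,\dots,V_m)+\sum_{i=1}^m V_i\,\D_k(V_1,\dots,V_m),
\]
valid for every $k\geq 1$. Indeed, a product of at most $k+1$ Kraus operators either has length at most $k$, and so lies in $\D_k$, or has length exactly $k+1$ and factors as $V_{i_{k+1}}(V_{i_k}\cdots V_{i_1})$ with the second factor in $\cS_k\subseteq\D_k$, hence lies in $V_{i_{k+1}}\D_k$; conversely every term on the right-hand side is a linear combination of products of length at most $k+1$.

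The key step, and the only place where the multiplicative structure enters, is the stabilization implication: if $\D_{k+1}=\D_k$ for some $k$, then $\D_{k+2}=\D_{k+1}$, and therefore $\D_j=\D_k$ for all $j\geq k$. This follows by substituting into the recursion, since
\[
\D_{k+2}=\D_{k+1}+\sum_{i=1}^m V_i\,\D_{k+1}=\D_k+\sum_{i=1}^m V_i\,\D_k=\D_{k+1},
\]
and one iterates the same computation. I expect this propagation of equality to be the main point requiring care; the rest is routine.

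Finally I will close by counting dimensions. Let $p$ be the least index with $\D_{p+1}=\D_p$. For $1\leq k<p$ the inclusion $\D_k\subsetneq\D_{k+1}$ is strict, so $\dim\D_k$ increases by at least one at each step, giving $\dim\D_1<\dim\D_2<\cdots<\dim\D_p$. If the $V_i$ do not all vanish then $\dim\D_1\geq 1$, so these $p$ strictly increasing integers lie in $\{1,\dots,n^2\}$, whence $p\leq n^2$; if every $V_i=0$ the statement is trivial with $p=1$. By the stabilization implication $\D_p=\D_j$ for all $j\geq p$, so that $\mathcal{A}(V_1,\dots,V_m)=\bigcup_{k\geq 1}\D_k(V_1,\dots,V_m)=\D_p(V_1,\dots,V_m)$, which is the desired conclusion.
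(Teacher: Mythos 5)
Your proof is correct and takes essentially the same route as the paper's: view $\D_1\subseteq\D_2\subseteq\cdots$ as an ascending chain of subspaces of the $n^2$-dimensional space $\cC^{n\times n}$, so it must stabilize within $n^2$ steps, and once it stabilizes it stabilizes forever. You are in fact slightly more careful than the paper, which records only the inclusion $\D_{k+1}\supseteq \D_k\cup\{V_iX: X\in\D_k\}$ and leaves implicit the reverse containment $\D_{k+1}\subseteq\D_k+\sum_i V_i\D_k$ that your stabilization step makes explicit.
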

\begin{proof}
It is clear that for all $k=1,2,\dots,$ we have
$$
\D_{k+1}(V_1,\dots,V_m)\supset \D_k(V_1,\dots,V_m) \cup \{V_iX: X\in \D_k(V_1,\dots,V_m),i\in\{1,\dots,m\}\}\enspace.
$$ Hence there is $p\leq n^2$ such that $\D_{p+1}(V_1,\dots,V_m)=\D_p(V_1,\dots,V_m)$ and thus $\cA(V_1,\dots,V_m)=\D_p(V_1,\dots,V_m)$.
\end{proof}
We next recall the definitions of irreducibility, strict positivity and primitivity for completely positive maps. 

\begin{defi}[Irreducibility~\cite{Farenick96}]
 The map $\Psi$ is irreducible if there is no face of $\sym_n^ +$ invariant by $\Psi$, where a face $\mathcal{F}$ of $\sym_n^ +$ is a (closed, convex) cone strictly contained in $\sym_n^ +$ such  that if $P\in \mathcal{F}$ then $Q\in \mathcal{F}$ for all $Q\preceq P$.
\end{defi}

\begin{defi}[Strict positivity]
 The map $\Psi$ is  strictly positive if for all $X\succeq 0$, $\Psi(X)\succ 0$.
\end{defi}
A standard compactness argument shows that $\Psi$ is strictly positive if and only if
\[
\Psi(X)\succeq \alpha \operatorname{tr}(X) I,\qquad \forall X\in \sym_n^+
\]
for some constant $\alpha>0$.

\begin{defi}[Primitivity~\cite{Sanz2012}]
 The  map $\Psi$ is primitive if there is an integer $p>0$ such that $\Psi^p$ is strictly positive.
\end{defi} 
It will be convenient to consider the following three problems.
\begin{prob}[Irreducibility of Completely Positive Maps]% in the orthogonal complement of a space]
\label{pb-irr}

Input: {\rm integers $n,m$, and matrices $V_1,\dots,V_m\subset \cC^{n\times n}$ with
rational entries}.
%, satisfying} $$
%\sum_{i=1}^ m V_i^ * V_i=I_n \enspace .
%$$

Question: {\rm Is the map $\Psi$ defined by~\eqref{a-PhiX} irreducible?}
\end{prob}
\begin{prob}[Primitivity of Completely Positive Maps]% in the orthogonal complement of a space]
\label{pb-prim}

Input: {\rm integers $n,m$, and matrices $V_1,\dots,V_m\subset \cC^{n\times n}$ with
rational entries}
%, satisfying} $$
%\sum_{i=1}^ m V_i^ * V_i=I_n \enspace .
%$$

Question: 
{\rm Is the map $\Psi$ defined by~\eqref{a-PhiX} primitive?}
%%{\rm Is the map associated to $\{V_1,\dots,V_m\}$ primitive?}
\end{prob}
\begin{prob}[Strict positivity of Kraus maps]% in the orthogonal complement of a space]
\label{pb-pos}

Input: {\rm integers $n,m$, and matrices $V_1,\dots,V_m\subset \cC^{n\times n}$ with
rational entries, satisfying} 
\begin{align}\label{e-unital}
\sum_{i=1}^ m V_i^ * V_i=I_n \enspace .
\end{align}
Question: {\rm Is the Kraus map associated to $\{V_1,\dots,V_m\}$ strictly positive?}
\end{prob}
We next show that the first two problems can be solved in polynomial time
whereas the last one is NP-hard.

%
%%In what follows, we study the complexity of deciding whether a Kraus map is irreducible (Problem~\ref{pb-irr}), the complexity of deciding whether a Kraus map is positive (Problem~\ref{pb-pos}) and the complexity of deciding whether a Kraus map  is primitive (Problem~\ref{pb-prim}).

%\begin{defi}[Self-loop]The unital completely positive map $\Phi$ contains all self-loops if there is $\alpha>0$ and another unital completely positive map $\Phi'$ such that $\Phi$ can be written in the form:$$\Phi(X)=\alpha X+(1-\alpha)\Phi'(X)$$\end{defi}

%\begin{rem} The completely positive map $\Phi$ contains all self-loops if and only if there is $\alpha>0$ such that\begin{align}\label{a-phiself}\mathbf{C}(\Phi)\geq \alpha \Omega \Omega^*\end{align}where $\Omega^*=(e_1^*,e_2^*,\dots,e_n^*)$.\end{rem}

\section{Checking the irreducibility and primitivity is polynomial}
We shall need the following characterization of irreducibility.
%x, which is an immediate consequence of Burnside's theorem.
\begin{prop}\label{th-irr}
The completely positive map $\Psi$ given by~\eqref{a-PhiX}
is irreducible if and only if  $\mathcal{A}(V_1,\dots,V_m)=\cC^{n\times n}$.
\end{prop}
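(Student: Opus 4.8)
The plan is to set up a dictionary between the invariant faces of the positive semidefinite cone and the common invariant subspaces of the Kraus operators, and then to invoke Burnside's theorem on matrix algebras. First I would recall the standard description of the faces of $\sym_n^+$: every face is of the form $\cF_W := \{X \succeq 0 : \operatorname{ran}(X) \subseteq W\}$ for a uniquely determined subspace $W \subseteq \cC^n$. The face $\cF_0 = \{0\}$ (corresponding to $W = 0$) is invariant under every map and so must be excluded, so the relevant faces are the $\cF_W$ with $W$ a proper nonzero subspace of $\cC^n$; thus $\Psi$ is irreducible precisely when it leaves no such $\cF_W$ invariant.

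The key lemma I would prove is: for a subspace $W$, the map $\Psi$ leaves $\cF_W$ invariant if and only if $V_i W \subseteq W$ for all $i$. For the ``if'' direction, observe that $\operatorname{ran}(V_i X V_i^*) \subseteq V_i \operatorname{ran}(X)$, so if $\operatorname{ran}(X) \subseteq W$ and $V_i W \subseteq W$ then each summand $V_i X V_i^*$, and hence $\Psi(X) = \sum_i V_i X V_i^*$, has range contained in $W$. For the ``only if'' direction I would feed in the extreme rays of the face: for a unit vector $w \in W$ the matrix $w w^*$ lies in $\cF_W$, and $\Psi(w w^*) = \sum_i (V_i w)(V_i w)^*$ must then have range in $W$; since the range of a sum of positive semidefinite matrices is the sum of their ranges, this range equals $\Span\{V_i w : i\}$, forcing $V_i w \in W$ for every $i$. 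Letting $w$ run over $W$ yields $V_i W \subseteq W$.

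Combining these two steps, $\Psi$ leaves a nontrivial face invariant if and only if the family $\{V_1,\dots,V_m\}$ has a common proper nonzero invariant subspace. Since invariance of a subspace is preserved under linear combinations and products, $W$ is invariant under every $V_i$ if and only if it is invariant under the whole algebra $\mathcal{A}(V_1,\dots,V_m)$ generated by the $V_i$. Hence $\Psi$ is irreducible exactly when $\mathcal{A}(V_1,\dots,V_m)$ acts irreducibly on $\cC^n$, i.e. admits no proper nonzero invariant subspace. At this point I would invoke Burnside's theorem: a subalgebra of $\cC^{n\times n}$ acting irreducibly on $\cC^n$ must equal $\cC^{n\times n}$, while $\cC^{n\times n}$ trivially has no proper nonzero invariant subspace; this gives the desired equivalence $\Psi$ irreducible $\iff \mathcal{A}(V_1,\dots,V_m) = \cC^{n\times n}$.

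The step I expect to require the most care is the ``only if'' direction of the invariance lemma: one must check that testing $\Psi$ on the rank-one generators $w w^*$ of $\cF_W$ detects every direction in which some $V_i$ could push $w$ out of $W$, which relies on the fact that the range of a sum of positive semidefinite matrices is the span of the individual ranges. A minor but necessary point in applying Burnside is the degenerate case: its conclusion presupposes a nonzero algebra, which holds here since $\Psi \neq 0$, and in particular under the normalization $\sum_i V_i^* V_i = I_n$ not all $V_i$ can vanish.
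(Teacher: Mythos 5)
Your proof is correct and follows the same route as the paper: reduce irreducibility to the non-existence of a common non-trivial invariant subspace of the $V_i$, then apply Burnside's theorem on matrix algebras. The only difference is that the paper simply cites Farenick's Theorem~2 for the first equivalence, whereas you prove it directly via the face--subspace correspondence for $\sym_n^+$; your argument for that step (including the use of the fact that the range of a sum of positive semidefinite matrices is the sum of their ranges) is sound.
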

\begin{proof}
Farenick showed in~\cite[Theorem~2]{Farenick96} that the reducibility
of $\Psi$ is equivalent to the existence of a non-trivial (other than $\{0\}$ or $\mathbb C^n$) common invariant subspace of all $\{V_i\}$. 
%This is equivalent to the algebra $\mathcal{A}(\Phi)$ being not transitive, which, 
By Burnside's theorem on matrix algebra (see~\cite{MR2073890}), the latter property holds if and only if the algebra $\mathcal{A}(V_1,\dots,V_m)$ is 
not the whole matrix space.
\end{proof}

We shall need the following characterization of primitivity
of completely positive maps,
which is a consequence of a ``quantum version
of Wielandt inequality'' established by Sanz, P\'erez-Garc\'ia, Wolf and Cirac
for Kraus maps.
%given in~\cite{Sanz2012}.
\begin{theo}[Corollary of~\cite{Sanz2012}]\label{theo-wolf}
Assume that the completely positive map $\Psi$ is irreducible.
Then, $\Psi$ is primitive if and only if there is $q\leq (n^2-m+1)n^2$ such that the space $\cS_q(V_1,\dots,V_m)$ coincides with $\cC^{n\times n}$,
for some $q\leq (n^2-m+1)n^2$. 
\end{theo}
\begin{proof}
Theorem~1 of~\cite{Sanz2012} shows that if $\Psi$ is a Kraus map, then, it is primitive if and only if $\cS_q(V_1,\dots,V_m)$ 
coincides with $\cC^{n\times n}$,
for some $q\leq (n^2-m+1)n^2$. We next show that this implies that the same
property holds for all irreducible completely positive maps.
Indeed, it follows from the Perron-Frobenius
theorem that the adjoint 
map $\Psi^*$ has an eigenvector $A$ in the cone of positive 
semidefinite matrices such that the associated eigenvalue is the spectral radius
of $\Psi$, $\rho(\Psi)$,
i.e.
\begin{align}
\sum_{1\leq i\leq m} V_i^* AV_i = \rho(\Psi)A \enspace .\label{e-kraus}
\end{align}
Since $\Psi$ is irreducible, $\Psi^*$
is also irreducible (this follows from~\cite[Theorem~2]{Farenick96}), and so this eigenvector must belong to the interior of the cone,
meaning that $A$ is a positive definite matrix. 
Now, for all invertible matrices $U$, define $\Gamma_U(X):=UXU^*$. 
Then, the map $\Phi = \rho(\Psi)^{-1} \Gamma_{A^{1/2}}\circ \Psi \circ \Gamma_{A^{-1/2}}$ satisfies 
\[
\Phi(X) = \sum_{i=1}^m W_i X W_i^*,\qquad \text{with } W_i = \rho(\Psi)^{-1/2}
A^{1/2}U_i A^{-1/2} \enspace ,
\]
and it follows from~\eqref{e-kraus} that it is a Kraus map.
Moreover, since $\cS_q(V_1,\dots,V_m)=A^{-1/2}\cS_q(W_1,\dots,W_m)A^{1/2}$,
$\cS_q(V_1,\dots,V_m)$ coincides with $\CC^{n\times n}$
if and only if $\cS_q(W_1,\dots,W_m)$ does. 
\end{proof}
%\begin{coro}\label{coro-f}The map $\Phi$ is irreducible if and only if for all $0<\alpha<1$ the map $ \alpha I+(1-\alpha)\Phi$ is primitive.\end{coro}\begin{proof}It is sufficient to note that $S_k(I+\Phi)=D_k(\Phi)$ because the identity matrix is a Kraus operator of $I+\Phi$.\end{proof}

%\begin{prop}Let $\Phi $ be a unital completely positive map which contains self-loops, written in the form of~\eqref{a-phiself}. Then thesystem~\eqref{} is globally convergent if and only if the algebra $\mathcal{A}(\Phi')=\M_n(\C)$.\end{prop}\begin{proof}Let $\Psi$ (respectively $\Psi'$) be the dual Kraus map of $\Phi$ (respectively $\Phi'$). Then $\Psi(X)=\alpha X+(1-\alpha)\Psi'(X)$. If the system~\eqref{} is globally convergent, then the unique density matrix, invariant by the dual Kraus map $\Psi$ is of fullrank. That means the unique density matrix invariant by $\Psi'$ is of full rank, which implies the irreducibility of $\Psi'$ and $\Phi'$. Therefor the algebra $\mathcal{A}(\Phi')=\M_n(\C)$. Inversely, if $\Phi'$ is irreducible, by~Corollary~\ref{coro-f}, the map $\Phi$ is primitive, thus the system~\eqref{} must be globally convergent.\end{proof}

%\begin{theo} Let $\{\Phi_t\}$ be a sequence of completely positive preserving identity maps satisfying:\begin{itemize}\item all $\Phi_t$ contains self-loops.\item all all \end{itemize}\end{theo}

\begin{coro}\label{pr-poly}
The irreducibility and the primitivity of  a completely positive map
can be checked in polynomial time. 
\end{coro}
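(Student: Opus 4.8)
The plan is to derive Corollary~\ref{pr-poly} from the two structural characterizations just established, namely Proposition~\ref{th-irr} for irreducibility and Theorem~\ref{theo-wolf} for primitivity, together with Lemma~\ref{l-pAPhiD}. The key point is that each characterization reduces a property of the map $\Psi$ to the question of whether a certain explicitly describable linear span of matrix products equals the full matrix space $\CC^{n\times n}$, and such a question is a finite-dimensional linear algebra computation whose size is controlled polynomially.

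First I would treat irreducibility. By Proposition~\ref{th-irr}, $\Psi$ is irreducible if and only if $\cA(V_1,\dots,V_m)=\CC^{n\times n}$. By Lemma~\ref{l-pAPhiD}, this algebra equals $\D_p(V_1,\dots,V_m)$ for some $p\leq n^2$, and more importantly the proof of that lemma exhibits $\D_p$ as the stabilization of the increasing chain $\D_1\subset\D_2\subset\cdots$ obtained by repeatedly left-multiplying a spanning set by the $V_i$. Concretely, I would maintain a basis of $\D_k$, viewed as a subspace of the $n^2$-dimensional space $\CC^{n\times n}$; at each step I form the candidate vectors $V_iX$ for $X$ in the current basis and $i\in\{1,\dots,m\}$, and reduce them against the current basis by Gaussian elimination. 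The chain strictly increases at each non-stabilizing step, so it stabilizes after at most $n^2$ iterations; each iteration handles at most $mn^2$ new matrices, each a vector of length $n^2$, so the whole computation is polynomial in $n$ and $m$. One then tests whether the stabilized dimension equals $n^2$.

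Next I would treat primitivity. Here the subtlety is that Theorem~\ref{theo-wolf} presupposes irreducibility, so I would first run the irreducibility test above; if $\Psi$ is not irreducible it cannot be primitive (a primitive map sends the cone into its interior for some power and hence is irreducible), and we answer ``no''. If $\Psi$ is irreducible, then by Theorem~\ref{theo-wolf} primitivity is equivalent to $\cS_q(V_1,\dots,V_m)=\CC^{n\times n}$ for some $q\leq(n^2-m+1)n^2$. I would compute the chain $\cS_1\subset\cS_2\subset\cdots$ of spans of products of \emph{exactly} $k$ operators by the same incremental Gaussian-elimination bookkeeping: a basis of $\cS_{k+1}$ is obtained by spanning the products $V_iX$ with $X$ ranging over a basis of $\cS_k$. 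Since each $\cS_q\subset\CC^{n\times n}$ has dimension at most $n^2$, and since the explicit bound $q\leq(n^2-m+1)n^2$ on the relevant exponent is polynomial, it suffices to iterate this recurrence a polynomial number of times and check whether the dimension reaches $n^2$.

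I do not expect a genuine obstacle, since both characterizations have already done the conceptual work; the only point requiring care is that all arithmetic remains polynomial-time, which is the issue of bit-complexity rather than of the number of arithmetic operations. Because the input matrices have rational entries, the Gaussian eliminations can be carried out exactly over $\mathbb{Q}$ (or over $\mathbb{Q}(i)$ for complex entries), and the standard bound showing that the bit-sizes of intermediate quotients in fraction-free or reduced Gaussian elimination stay polynomial guarantees that the rank computations are genuinely polynomial-time and not merely polynomial in the number of operations. Assembling these observations gives that both Problem~\ref{pb-irr} and Problem~\ref{pb-prim} are decidable in polynomial time, which is the assertion of the corollary.
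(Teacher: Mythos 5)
Your proposal is correct and follows essentially the same route as the paper: iteratively computing bases of the chains $\D_k(V_1,\dots,V_m)$ and $\cS_q(V_1,\dots,V_m)$ by Gaussian elimination, using Proposition~\ref{th-irr} with Lemma~\ref{l-pAPhiD} for irreducibility and Theorem~\ref{theo-wolf} (after first testing irreducibility) for primitivity, with the same attention to polynomial bit-complexity of the basis extraction.
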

\begin{proof}
By Proposition~\ref{th-irr} and Lemma~\ref{l-pAPhiD}, to decide if the Kraus map $\Psi$ is irreducible, we shall
compute the increasing sequence of matrix subspaces $\D_s(V_1,\dots,V_m)$, $s=1,2,\dots$,
and look for the first integer $k\leq n^ 2$ such that $\D_{k}(V_1,\dots,V_m)=\D_{k+1}(V_1,\dots,V_m)$.
For a given $s$, we shall represent $\D_s(V_1,\dots,V_m)$ by
a basis, i.e.,
 $$\D_s(V_1,\dots,V_m)=\Span\{M_1,\cdots,M_l\}$$
 where $\{M_1,\dots,M_l\}\in \cC^{n\times n}$ are linearly independent matrices.  Recall that extracting a basis from a family of rational
vectors can be done in polynomial time in the bit model.
Since $\D_{s+1}(V_1,\dots,V_m) = \Span\{V_iM_s, M_s,\; 1\leq i\leq m, \; 1\leq s\leq l\}$, it follows that we can compute inductively a basis
$M_1,\cdots,M_l$ of $\D_{s}(V_1,\dots,V_m)$, with $l\leq n^2$,
%~\cite{kaltofenstorjohann}. 
and that the number of bits needed to code the basis
elements $M_1,\dots,M_l$ remain polynomially bounded in the length of
the input. Hence, a basis representation
of the algebra  $\mathcal{A}(V_1,\dots,V_m)$ 
can be obtained in polynomial time.

%Suppose for $s=k$ we have a basis representation 
%Then for $s=k+1$, we have:
%$$
%D_{s+1}(\Phi)=\Span\{S_j,V_iS_j:1\leq i\leq m, 1\leq j\leq l\}.
%$$
%One can find  a basis representation of $D_{s+1}(\Phi)$ using a polynomial arithmetic operations. Once for some $k<n^2$ the dimension of $D_k(\Phi)$ is $d^2$ we know that $\Phi$ is irreducible. 
%Suppose now that $\Phi$ is irreducible.
%Then by Proposition~\ref{pr-irpr}, deciding if the system~\eqref{a-sysnonc} is 
%globally convergent reduces to checking whether $\Phi$ is primitive. By Theorem~\ref{theo-wolf}, $\Phi$ is primitive if and only if $S_q(\Phi)$
%is of dimension $n^2$ for some $q\leq (n^2-m+1)n^2$. 
Arguing
as above, a basis representation of $\cS_q(V_1,\dots,V_m)$ for some $q\leq (n^2-m+1)n^2$ can be computed
in polynomial time. Thus, to check the primitivity, we first check the irreducibility (which is a necessary condition),  and if it is satisfied,
we check the condition of Theorem~\ref{theo-wolf}.
%For the same reason, this can be done using polynomial arithmetic operations.
\end{proof}

\section{Checking the strict positivity is NP-hard}\label{subsec-NP}
In this section, we study the complexity of Problem~\ref{pb-pos}: deciding if a Kraus map is strictly positive.
%In this section, we study the complexity of deciding if a matrix subspace contains a rank one matrix. 
%Our main result shows that deciding if there is a rank one
%matrix orthogonal to a given subspace of $\cC^{n\times n}$ is NP-hard,
%even if this space is given as the linear span
%of the matrices arising in the representation
%of a Kraus map.
%Formally, we consider the following problem.
%SG: I found the earlier explanation badly stated so I rephrased, sayng 
%``subspace generated by a set of Kraus operators'' was not clear
% as there is only one Kraus operator here
%SG: I think it is better to be more formal, in your previous statement,
% it was not clear whether n,m were part of the input, also the 
% fact that the matrices have rational entries was not mentioned
% the ideal would be to find a nicer name for this problem...
% I dont know if the space span V_i has a name in quantum information
% kraus maps...
First we show that the strict positivity of a Kraus map is equivalent to the non-existence of rank one matrix in the orthogonal complement of the subspace spanned by the 
Kraus operators.

\begin{lemma}\label{l-stp}
The Kraus map $\Psi$
is strictly positive if and only if we cannot find two nonzero vectors $x,y \in \cC^n$ such that
\begin{align}\label{e-bilinear}
x^*V_iy=0,\enspace\forall i=1,\dots,m.
\end{align}
\end{lemma}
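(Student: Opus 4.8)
The plan is to characterize strict positivity by unpacking the definition directly and linking it to the bilinear condition~\eqref{e-bilinear}. The map $\Psi$ is strictly positive iff for every $X\succeq 0$, $X\neq 0$, we have $\Psi(X)\succ 0$. Since every positive semidefinite $X$ is a sum of rank-one matrices $xx^*$, and since $\Psi$ is additive and preserves positivity, it suffices to test $\Psi$ on rank-one inputs: $\Psi$ fails to be strictly positive iff there exists a nonzero rank-one $X=xx^*$ with $\Psi(X)\not\succ 0$, i.e.\ $\Psi(xx^*)$ is positive semidefinite but singular. Thus the first step is to reduce the quantifier over all of $\sym_n^+$ to a quantifier over rank-one matrices $xx^*$ with $x\neq 0$.

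First I would observe that $\Psi(xx^*)=\sum_{i=1}^m V_i xx^* V_i^* = \sum_{i=1}^m (V_i x)(V_i x)^*$, which is a sum of rank-one positive semidefinite matrices. A sum of such terms is singular precisely when all the vectors $V_i x$ lie in a common proper subspace, i.e.\ there is a nonzero vector $y$ orthogonal to every $V_i x$. Concretely, $\Psi(xx^*)$ is singular iff there exists a nonzero $y\in\cC^n$ with $y^*\Psi(xx^*)y=0$; expanding, this reads $\sum_{i=1}^m |y^*V_i x|^2 = 0$, which holds iff $y^*V_i x = 0$ for all $i=1,\dots,m$. Since the complex conjugate of $y^*V_i x$ is $x^*V_i^* y$, this is exactly the bilinear system~\eqref{e-bilinear} up to relabeling; so the existence of a singular image matches the existence of the forbidden pair of vectors.

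Putting the two steps together: $\Psi$ is \emph{not} strictly positive iff there is a nonzero $x$ with $\Psi(xx^*)$ positive semidefinite but not positive definite, which by the above holds iff there are nonzero $x,y$ satisfying $y^*V_i x=0$ for all $i$. Taking the contrapositive gives the claimed equivalence. I would present this as a chain of ``iff'' statements in both directions to make the logic transparent.

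The main obstacle, or at least the point requiring care, is the reduction from general $X\succeq 0$ to rank-one inputs. One must argue that if $\Psi(xx^*)\succ 0$ for every nonzero $x$ then $\Psi(X)\succ 0$ for every nonzero $X\succeq 0$; this follows because writing $X=\sum_j x_j x_j^*$ with at least one $x_j\neq 0$ gives $\Psi(X)=\sum_j \Psi(x_j x_j^*)\succeq \Psi(x_{j_0}x_{j_0}^*)\succ 0$, using that a sum of positive semidefinite matrices dominating a positive definite one is itself positive definite. The only subtlety is ensuring $x\neq 0$ throughout and correctly matching the positive-semidefinite-but-singular image with the solvability of the bilinear system; the trace-preserving (unital) constraint~\eqref{e-unital} plays no role in this particular equivalence and need not be invoked.
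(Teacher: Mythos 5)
Your proof is correct and follows essentially the same route as the paper's: restrict to rank-one inputs $xx^*$, note that $\Psi(xx^*)$ is singular exactly when some nonzero $y$ satisfies $\sum_i|y^*V_ix|^2=0$, and negate. The only difference is that you spell out the reduction from general $X\succeq 0$ to rank-one matrices, a step the paper treats as immediate from the definition.
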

\begin{proof}
By definition, the map $\Psi$ is  strictly positive if and only if for all nonzero vectors $y\in \cC^n$, the matrix
$$
\Psi(yy^*)=\sum_{i=1}^m V_iyy^*V_i^*
$$
is positive definite. 
This holds if and only if for all nonzero vectors $x\in \cC^n$, 
$$
\sum_{i=1}^m x^*V_iyy^*V_i^* x=\sum_{i=1}^n |x^*V_i y|^2 >0.
$$
Therefore $\Phi$ is not strictly positive if and only if we can find nonzero vectors $x,y\in \cC^n$ such that~\eqref{e-bilinear} holds.
%\begin{align}\label{e-bilinear}
%x^*V_i y=0,\enspace \forall i=1,\dots,m.
%\end{align}
\end{proof}
Hence,  the strict positivity of a Kraus map (Problem~\ref{pb-pos}) is equivalent to  the non feasibility of the following bilinear system associated to the Kraus operators $\{V_1,\dots,V_m\}$.
\begin{prob}[Unital bilinear feasibility]% in the orthogonal complement of a space]
\label{pb-feasi}
Input: {\rm integers $n,m$, and matrices $V_1,\dots,V_m\subset \cC^{n\times n}$ with
rational entries, satisfying~\eqref{e-unital}.}
% \begin{align}\label{a-sumVi}
%\sum_{i=1}^ m V_i^ * V_i=I_n \enspace .
%\end{align}
Question: {\rm is there a nonzero solution to the following bilinear system:}
$$
x^TV_iy=0,\enspace\forall i=1,\dots,m\enspace ?
$$
\end{prob}
Problem~\ref{pb-feasi} is trivially equivalent to the following problem on the existence of a rank one matrix in the orthogonal complement of the subspace   generated by the Kraus operators $\{V_1,\dots,V_m\}$.
\begin{prob}[Existence of rank one matrix]% in the orthogonal complement of a space]
\label{pb-rankone}
Input: {\rm integers $n,m$, and matrices $V_1,\dots,V_m\subset \cC^{n\times n}$ with
rational entries, satisfying~\eqref{e-unital}.}
%\begin{align}\label{e-unital}
%\sum_{i=1}^ m V_i^ * V_i=I_n \enspace .
%\end{align}
Question: {\rm is there a rank one matrix in the orthogonal complement
of the subspace  spanned by $\{V_1,\dots,V_m\}$?}
\end{prob}
Consider also the following similar problem without the unital constraint on matrices:
\begin{prob}[Bilinear feasibility]% in the orthogonal complement of a space]
\label{pb-feasi-g}
Input: {\rm integers $n,m$, and matrices $W_1,\dots,W_m\subset \cC^{n\times n}$ with
rational entries. }
Question: {\rm is there a nonzero solution to the following bilinear system:}
$$
x^TW_iy=0,\enspace\forall i=1,\dots,m\enspace ?
$$
\end{prob}

\begin{theo}\label{th-3SATtorandkone}
 The 3SAT problem is reducible in polynomial time to Problem~\ref{pb-feasi}.
%deciding if the orthogonal complement to the subspace spanned by $\{V_1,\dots,V_m\}\subset \cC^{n\times n}$ satisfying
%$$
%\sum_{i=1}^ m V_i^ * V_i=I_n
%$$
% contains a rank one matrix.
\end{theo}
The proof is based on the following observation. 
An instance of the 3SAT problem with $N$ Boolean variables $X_1,\ldots,X_{N}$ and $M$ clauses can be coded by a system of polynomial equations in $N$ complex variables $x_1,\dots, x_{N}$,
\begin{align}\label{a-e12}
\left\{\begin{array}{l}
 (1+p_{i}x_{k^1_i})(1+q_{i}x_{k^2_i})(1+r_{i}x_{k^3_i})=0,\quad i=1,\cdots,M\\
x_i^2=1,\quad i=1,\cdots,N
\end{array}\right.
\end{align}
where $k^1_i, k^2_i, k^3_i \in \{1,\dots, N\}$, $p_{i},q_{i}, r_i\in\{\pm 1\}$ and $k^1_i\neq k^2_i $ for all $1\leq i\leq M$.
 The Boolean variable $X_i$ is true if $x_i=1$ and false if $x_i=-1$.
For instance, the clause $X_1\vee \neg X_2 \vee X_4$
corresponds to the polynomial $(1-x_1)(1+x_2)(1-x_4)$ and the clause $\neg X_6\vee \neg X_1\vee  X_2$
corresponds to the polynomial $(1+x_6)(1+x_1)(1-x_2)$. 

Therefore, to prove Theorem~\ref{th-3SATtorandkone}, it is sufficient to construct in polynomial time a set of Kraus operators $\{V_1,\dots,V_m\} \subset \cC^ n$ with rational entries satisfying~\eqref{e-unital},
%$$
%\sum_{i=1}^ m V_i^ * V_i=I_n
%$$
such that there is a solution to~\eqref{a-e12} if and only if there are two nonzero vectors $x,y\in\cC^ n$ such that`\eqref{e-bilinear} holds.
%$$
%x^*V_iy=0,\enspace \forall i=1,\dots,m\enspace.
%$$

We begin by the following basic lemma.
\begin{lemma}\label{r-prop}
Let $a_k(\cdot , \cdot):\cC^n\times \cC^n \rightarrow \cC, \enspace 1\leq  k\leq M$ be a finite set of bilinear forms. 
 There is a solution $x\in \cC^n$ to the system 
$$
a_k(x,x)=0 ,\qquad 1\leq k\leq M
$$
if and only if there is a pair of non-zero vectors $x=(x_i)_{1\leq i\leq n},y=(y_i)_{1 \leq i\leq n}\in \cC^n$ satisfying the system
\begin{align}\label{a-e2}
 \left\{\begin{array}{l}
 a_k(x,y)=0,\qquad 1\leq k\leq M\\
 x_i y_j-x_jy_i=0,
\qquad  1\leq i<j\leq n 
\enspace .
\end{array}\right.
\end{align}
\end{lemma}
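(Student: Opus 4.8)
The plan is to exploit the fact that the second group of equations in~\eqref{a-e2}, namely $x_iy_j-x_jy_i=0$ for all $1\le i<j\le n$, expresses exactly that the vectors $x$ and $y$ are proportional. Indeed, these are precisely the vanishing conditions for all $2\times 2$ minors of the $2\times n$ matrix whose rows are $x$ and $y$, so they hold if and only if this matrix has rank at most one, i.e.\ $x$ and $y$ are linearly dependent. Once this reading is in place, the equivalence reduces to a one-line substitution in each direction, using only that each form $a_k$ is linear in its second argument.

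For the direct implication, starting from a nonzero $x$ with $a_k(x,x)=0$ for $1\le k\le M$, I would simply set $y:=x$. Then the bilinear equations read $a_k(x,y)=a_k(x,x)=0$ and hold by hypothesis, while the minor equations $x_ix_j-x_jx_i=0$ hold trivially; hence $(x,y)=(x,x)$ is a pair of nonzero vectors solving~\eqref{a-e2}.

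For the converse, given a pair of nonzero vectors $x,y$ solving~\eqref{a-e2}, I would first use the minor equations together with $x\neq 0$ to write $y$ as a scalar multiple of $x$. Choosing an index $i_0$ with $x_{i_0}\neq 0$, the relations $x_{i_0}y_j=x_jy_{i_0}$ give $y_j=\lambda x_j$ for every $j$, where $\lambda:=y_{i_0}/x_{i_0}$, and $\lambda\neq 0$ since $y\neq 0$. Substituting into the bilinear equations and pulling the scalar out of the second slot yields $0=a_k(x,y)=a_k(x,\lambda x)=\lambda\,a_k(x,x)$, whence $a_k(x,x)=0$ for every $k$; as $x\neq 0$, this is the required solution.

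There is essentially no obstacle beyond the geometric interpretation of the minor constraints as a rank-one (proportionality) condition; the remaining manipulations are immediate consequences of bilinearity. The only point worth flagging is the implicit convention that ``a solution $x$'' on the left-hand side refers to a \emph{nonzero} vector, since otherwise $x=0$ would satisfy $a_k(x,x)=0$ vacuously and the stated equivalence would fail; both directions above are written with this convention in mind.
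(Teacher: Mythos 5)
Your proof is correct and follows exactly the same idea as the paper, whose entire proof is the one-line observation that the minor equations force $y$ to be proportional to $x$; you have simply written out the two substitutions that this observation makes immediate. Your remark that ``solution'' on the left-hand side must implicitly mean a nonzero vector is a fair and accurate reading of the statement as it is used later in the paper.
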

\begin{proof}
The last equations require that $y$ be proportional to $x$.
\end{proof}
The next lemma shows that  system~\eqref{a-e12} can be transformed into a set of homogeneous equations.
\begin{lemma}\label{l-transf}
 Let $N,M\in \mathbb N$. Let $(k_i^1)_i, (k_i^2)_i, (k_i^3)_i$ 
be three sequences of integers in $\{ 1,\cdots,N\}$. Let $(p_i)_i, (q_i)_i, (r_i)_i$ be three sequences of real numbers. 
Consider the following system of equations on the variables $(x_i)_{1\leq i\leq N}$:
\begin{align}\label{a-e6}
\left\{\begin{array}{l}
 (1+p_{i}x_{k^1_i})(1+q_{i}x_{k^2_i})(1+r_{i}x_{k^3_i})=0,\quad i=1,\cdots,M\\
x_i^2=1,\quad i=1,\cdots,N
\end{array}\right.
\end{align}
The system~\eqref{a-e6} has a solution $x\in \cC^N$ if and only if there is a pair of nonzero vectors $x=(x_i)_{0\leq i \leq N+2M},y=(y_i)_{0\leq i\leq N+2M}\in \cC^{N+2M+1}$ satisfying
the following system:
\begin{align}\label{a-e10}
\left\{\begin{array}{l}
 (x_0+p_{i}x_{k^1_i}+q_{i}x_{k^2_i}+p_{i}q_{i}x_{N+i})y_{N+M+i}=0,\quad i=1,\cdots,M\\
x_{k_i^1}y_{k_i^2}-x_0y_{N+i}=0,\quad i=1,\cdots,M\\
(x_0+r_{i}x_{k^3_i}-x_{N+M+i})y_j=0,\quad i=1,\cdots,M, \quad j=0,\dots,N+2M\\
x_iy_i-x_0y_0=0,\quad i=1,\cdots,N+M\\
 x_i y_j-x_jy_i=0,
\qquad  0\leq i<j\leq N+2M 
\end{array}\right.
\end{align}
\end{lemma}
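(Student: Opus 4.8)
The plan is to establish both implications by first analyzing the last block of \eqref{a-e10}, the equations $x_iy_j-x_jy_i=0$ for $0\le i<j\le N+2M$. Exactly as in Lemma~\ref{r-prop}, these say that the matrix with columns $x$ and $y$ has rank at most one, so, since both vectors are required to be nonzero, they are proportional: $y=\lambda x$ for some $\lambda\neq 0$. Throughout I read the coordinates of $x$ as follows: $x_0$ is a homogenizing coordinate, $x_1,\dots,x_N$ are the original variables of \eqref{a-e6}, the coordinates $u_i:=x_{N+i}$ ($1\le i\le M$) are meant to carry the partial products $x_{k_i^1}x_{k_i^2}$, and the coordinates $v_i:=x_{N+M+i}$ ($1\le i\le M$) are meant to carry the factors $1+r_ix_{k_i^3}$. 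The two blocks $x_iy_i-x_0y_0=0$ and $x_{k_i^1}y_{k_i^2}-x_0y_{N+i}=0$ will impose $x_j^2=1$ and $u_i=x_{k_i^1}x_{k_i^2}$, while the block $(x_0+r_ix_{k_i^3}-v_i)y_j=0$ will impose $v_i=1+r_ix_{k_i^3}$; the point of the construction is that the first block then collapses to the clause polynomial.

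For the direction from \eqref{a-e6} to \eqref{a-e10}, I construct an explicit pair. Given $x_1,\dots,x_N$ solving \eqref{a-e6}, set $x_0=1$, $u_i=x_{k_i^1}x_{k_i^2}$, $v_i=1+r_ix_{k_i^3}$, and take $y=x$. The proportionality equations hold trivially; the equations $x_iy_i-x_0y_0=0$ reduce to $x_i^2=1$ (and to $u_i^2=1$, which follows since each $x_j^2=1$); the equation $x_{k_i^1}y_{k_i^2}-x_0y_{N+i}=0$ is the definition of $u_i$; and the block $(x_0+r_ix_{k_i^3}-v_i)y_j=0$ holds by the definition of $v_i$. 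Finally the first block becomes $(1+p_ix_{k_i^1}+q_ix_{k_i^2}+p_iq_iu_i)v_i$, and using the identity $1+p_ix_{k_i^1}+q_ix_{k_i^2}+p_iq_ix_{k_i^1}x_{k_i^2}=(1+p_ix_{k_i^1})(1+q_ix_{k_i^2})$ together with $v_i=1+r_ix_{k_i^3}$ this is exactly the $i$-th clause polynomial of \eqref{a-e6}, hence vanishes. The pair is nonzero since $x_0=1$.

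For the converse, start from a nonzero pair $(x,y)$ solving \eqref{a-e10}, with $y=\lambda x$, $\lambda\neq 0$, as above. Each equation of \eqref{a-e10} is separately homogeneous of degree one in $x$ and in $y$, so solutions are preserved under independent rescaling of $x$ and of $y$. Hence, once I know $x_0\neq 0$, I may rescale to arrange $x_0=1$ and then $y=x$. Then $x_iy_i-x_0y_0=0$ gives $x_i^2=1$ for $1\le i\le N$, the block $x_{k_i^1}y_{k_i^2}-x_0y_{N+i}=0$ gives $u_i=x_{k_i^1}x_{k_i^2}$, the block $(x_0+r_ix_{k_i^3}-v_i)y_j=0$ (using $y\neq 0$) gives $v_i=1+r_ix_{k_i^3}$, and the first block factors as the $i$-th clause polynomial; thus $(x_1,\dots,x_N)$ solves \eqref{a-e6}.

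The one step that requires genuine care, and that I expect to be the main obstacle, is excluding the spurious case $x_0=0$, which would otherwise break the normalization in the converse. Suppose $x_0=0$. From $y=\lambda x$ with $\lambda\neq 0$, the equations $x_iy_i-x_0y_0=0$ force $x_i=0$ for all $1\le i\le N+M$; in particular every original coordinate vanishes, so $x_{k_i^3}=0$. The block $(x_0+r_ix_{k_i^3}-x_{N+M+i})y_j=0$ then reduces to $-v_i\,y_j=0$ for all $j$, and since $y\neq 0$ this forces $v_i=0$ for every $i$. Hence $x=0$, contradicting that the pair is nonzero. Therefore $x_0\neq 0$, which justifies the normalization and completes the equivalence.
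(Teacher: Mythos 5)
Your proof is correct and follows essentially the same route as the paper's: the paper organizes the argument as a chain of intermediate systems (expanding the clause polynomial, introducing the auxiliary variables $x_{N+i}$, $x_{N+M+i}$, homogenizing with $x_0$, then dehomogenizing via the proportionality equations of Lemma~\ref{r-prop}), but the substance — including the decisive step of ruling out $x_0=0$ by first killing $x_1,\dots,x_{N+M}$ via the block $x_iy_i-x_0y_0=0$ and then killing $x_{N+M+i}$ via the third block — is identical to yours. No gaps.
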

\begin{proof}
 A simple rewriting of the system~\eqref{a-e6} is:
\begin{align}\label{a-e7}
\left\{\begin{array}{l}
 (1+p_{i}x_{k^1_i}+q_{i}x_{k^2_i}+p_{i}q_{i}x_{k^1_i}x_{k^2_i})(1+r_{i}x_{k^3_i})=0,\quad i=1,\cdots,M\\
x_i^2=1,\quad i=1,\cdots,N
\end{array}\right.
\end{align}
By introducing $2M$ extra variables, denoted by $\{x_{N+i}\}_{1\leq i\leq 2M}$,
 to replace the variables $\{x_{k_i^1}x_{k_i^2}, 1+r_{i}x_{k^3_i}\}_{i\leq M}$, we rewrite the system~\eqref{a-e7} as:
\begin{align}\label{a-e8}
\left\{\begin{array}{l}
 (1+p_{i}x_{k^1_i}+q_{i}x_{k^2_i}+p_{i}q_{i}x_{N+i})x_{N+M+i}=0,\quad i=1,\cdots,M\\
x_{k_i^1}x_{k_i^2}-x_{N+i}=0,\quad i=1,\cdots,M\\
1+r_{i}x_{k^3_i}-x_{N+M+i}=0,\quad i=1,\cdots,M\\
x_i^2=1,\quad i=1,\cdots,N+M\\
\end{array}\right.
\end{align}
We next add an extra variable $x_0$ to replace the affine term $1$ to construct a system of homogeneous polynomial equations
of degree 2:
\begin{align}\label{a-e9}
\left\{\begin{array}{l}
 (x_0+p_{i}x_{k^1_i}+q_{i}x_{k^2_i}+p_{i}q_{i}x_{N+i})x_{N+M+i}=0,\quad i=1,\cdots,M\\
x_{k_i^1}x_{k_i^2}-x_0x_{N+i}=0,\quad i=1,\cdots,M\\
(x_0+r_{i}x_{k^3_i}-x_{N+M+i})x_j=0,\quad i=1,\cdots,M, \quad j=0,\dots,N+2M\\
x_i^2-x_0^2=0,\quad i=1,\cdots,N+M\\
\end{array}\right.
\end{align}
Then that there is a solution to~\eqref{a-e8} if and only if there is a solution $x=(x_i)_{0\leq i\leq N+2M}$ to~\eqref{a-e9} such that $x_0\neq 0$.
By Lemma~\ref{r-prop}, we know that the system~\eqref{a-e9} has a solution $x=(x_i)_{0\leq i\leq N+2M}$
with $x_0\neq 0$ if and only if there 
is a pair of non-null vectors $x=(x_i)_{0\leq i\leq N+2M}$ and $y=(y_i)_{0\leq i\leq N+2M}$ with $x_0y_0\neq 0$ satisfying~\eqref{a-e10}.

So far, we proved that there is a solution to~\eqref{a-e6} if and only if there is a pair of nonzero vectors $x,y \in \cC^{N+2M+1}$ satisfying~\eqref{a-e10} such that $x_0y_0\neq 0$.
We next prove by contradiction that all nonzero pairs of solutions to~\eqref{a-e10} satisfy $x_0y_0\neq 0$.

Let $x=(x_i)_{0\leq i\leq N+2M}$ and $y=(y_i)_{0\leq i\leq N+2M}$ be a pair of nonzero solutions to~\eqref{a-e10} such that $x_0y_0=0$. Since by the last constraint in~\eqref{a-e10}, $x$ and $y$ are proportional to each other, we know that $x_0=y_0=0$. 
Suppose that there is $1\leq i_0\leq N+M$ such that $x_{i_0}\neq 0$, then by the fourth equation of~\eqref{a-e10} we know that:
$$
x_{i_0}y_{i_0}=0,
$$
thus $y_{i_0}=0$. This implies that $y$ is a zero vector because $x$ and $y$ are proportional to each other. Hence $x_i=0$ for all $i\leq N+M$. Now we apply this condition to the third equation in~\eqref{a-e10} to obtain:
$$
x_{N+M+i}y_j=0, \quad i=1,\dots,M,\enspace j=0,\dots,N+2M\enspace.
$$
If $x$ is a nonzero vector, necessarily there is $i_0$ such that $x_{N+M+i_0}\neq 0$, in that case $y$ is a zero vector. Therefore we deduce that for all nonzero solution of~\eqref{a-e10}, it is necessary that $x_0y_0\neq 0$.

\end{proof}

\begin{lemma}\label{l-rankone}
Consider the system~\eqref{a-e6} in Lemma~\ref{l-transf}. We suppose in addition that $k_i^1\neq k_i^2$ for all $1\leq i\leq M$ and
that $(p_i)_i, (q_i)_i, (r_i)_i$ are sequences of numbers in $\{\pm 1\}$.
Let $n=N+2M+1$. There is a finite family of matrices $\{V_i\}_{1\leq i\leq m}\subset \cC^{n\times n}$ with entries in $\{0,\pm 1,\pm\frac{1}{3}\}$ such that
the system~\eqref{a-e6} has a solution if and only if there is nonzero solution to the following bilinear system:
$$
x^T V_i y=0,\enspace i=1,\dots,m\enspace.
$$
Besides,
the integer $m$ can be bounded by a polynomial in $N$ and $M$ and the matrices $\{V_i\}_{1\leq i\leq m}$ satisfy:
$$
\sum_{i=1}^m V_i^*V_i=(2N+7M+4)^2 I_n
$$ 
\end{lemma}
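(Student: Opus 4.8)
The starting point is Lemma~\ref{l-transf}: feasibility of~\eqref{a-e6} is equivalent to the existence of a nonzero pair $(x,y)\in\cC^n\times\cC^n$, with $n=N+2M+1$, solving the system~\eqref{a-e10}. The essential observation is that \emph{every} equation of~\eqref{a-e10} is bilinear in $(x,y)$, so each one may be rewritten as $x^TM_\ell y=0$ for an explicit coefficient matrix $M_\ell\in\cC^{n\times n}$ obtained by reading off the coefficients. I would simply list them: the first family gives the rank-one matrices $a_i e_{N+M+i}^T$ with $a_i=e_0+p_ie_{k_i^1}+q_ie_{k_i^2}+p_iq_ie_{N+i}$; the second gives $e_{k_i^1}e_{k_i^2}^T-e_0e_{N+i}^T$; the third gives $b_ie_j^T$ with $b_i=e_0+r_ie_{k_i^3}-e_{N+M+i}$; the fourth gives $e_ie_i^T-e_0e_0^T$; and the last gives the antisymmetric matrices $e_ie_j^T-e_je_i^T$. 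All entries lie in $\{0,\pm1\}$ and the number of matrices is polynomial in $N,M$.

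Since the solution set of a bilinear system $\{x^TMy=0\}$ depends only on the linear span of the matrices, the family $\{M_\ell\}$ already realizes the equivalence ``\eqref{a-e6} feasible $\iff$ the bilinear system has a nonzero solution''. The point I would exploit next is that the family may be freely \emph{enlarged} by any matrix $B$ with $x^TBy=0$ on the solutions of~\eqref{a-e10} without disturbing this equivalence: the backward implication only needs the $M_\ell$ to stay in the family, and the forward implication survives because such $B$ vanish on every solution. On the solution set the vectors $x,y$ are proportional (the last family of~\eqref{a-e10}), so $x^TBy=\lambda\,x^TBx=0$ for every antisymmetric $B$; thus the entire space of antisymmetric matrices, which is moreover already contained in $\Span\{M_\ell\}$, is a free reservoir of fillers. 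This reservoir is what I would use to force the unital relation.

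A short computation shows that each $M_\ell$ has at most two nonzero columns and that these are orthogonal, so every $M_\ell^*M_\ell$ is diagonal; hence $G:=\sum_\ell M_\ell^*M_\ell$ is a \emph{diagonal} matrix, whose entries are linear in $N,M$ and depend, at the positions $1,\dots,N$, on the number of clauses in which each variable occurs. To obtain $\sum_iV_i^*V_i=c^2I_n$ it therefore suffices to top this diagonal up to the constant value $c^2$, i.e.\ to write $c^2I_n-G=\sum_jB_j^*B_j$ with fillers $B_j$ vanishing on the solution set. Using antisymmetric fillers $e_ie_j^T-e_je_i^T$, for which $B_j^*B_j=e_ie_i^T+e_je_j^T$, this reduces to realizing a prescribed nonnegative diagonal as a weighted degree sequence of a multigraph, which is feasible as soon as $c^2$ dominates every $G_{kk}$. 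Choosing $c$ comfortably large, of the form $2N+7M+4$, guarantees both positivity of all deficits and feasibility of the realization with a polynomial number of fillers; the fractional entries $\pm\tfrac13$ amount to a uniform explicit choice of fillers (equivalently, to rescaling the three-term forms $b_i$, which leaves their bilinear constraint unchanged) used to hit the exact common value on every diagonal entry simultaneously.

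The main difficulty is not the reduction but precisely this completion step. One must exhibit fillers that are at once (i) vanishing on the solution set, so that the reduction is preserved; (ii) bounded, with entries in $\{0,\pm1,\pm\tfrac13\}$; (iii) polynomial in number; and (iv) summing \emph{exactly} to $c^2I_n-G$, not merely to a matrix $\preceq c^2I_n$. Reconciling the exact-equality requirement with the entry bound and the polynomial count, and thereby pinning down the precise constant $c=2N+7M+4$, is where the bulk of the (elementary but careful) bookkeeping lies, and is the step I expect to be the real obstacle.
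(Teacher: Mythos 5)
You have the right skeleton --- rewrite each equation of~\eqref{a-e10} as $x^\top M_\ell y=0$, observe that $G=\sum_\ell M_\ell^*M_\ell$ is diagonal with entries bounded by a linear function of $N,M$, and then pad the family with matrices that vanish on the solution set so as to raise $G$ exactly to $c^2I_n$ --- and this is indeed the structure of the paper's proof. But the completion step, which you explicitly defer as ``the real obstacle,'' is the actual content of the lemma, and the route you sketch for it does not go through as stated. Antisymmetric fillers $e_ie_j^\top-e_je_i^\top$ each contribute $e_ie_i^\top+e_je_j^\top$ to the Gram sum, so realizing the deficit $c^2I_n-G$ is a multigraph degree-sequence problem; making $c$ large handles the majorization condition but not the parity condition ($\sum_k(c^2-G_{kk})$ must be even), which can fail and which you do not address. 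Exactness, entry bounds, and polynomial count are not the issue; the issue is that pairs of diagonal increments cannot hit an arbitrary integer deficit vector.

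The paper's resolution is the ingredient you only gesture at in passing: the third family of~\eqref{a-e11} supplies, for \emph{every} index $j\in\{0,\dots,N+2M\}$, a matrix $A_{n_j}=b_ie_j^\top$ with $b_i=e_0+r_ie_{k_i^3}-e_{N+M+i}$, hence $A_{n_j}^*A_{n_j}=\|b_i\|^2e_je_j^\top=3e_je_j^\top$. Setting $B_j=A_{n_j}/3$ gives $B_j^*B_j=\tfrac13e_je_j^\top$, a \emph{single} diagonal unit scaled by $1/3$; including $3l_j$ copies of $B_j$ adds exactly $l_je_je_j^\top$ for any nonnegative integer $l_j$, so each diagonal entry can be topped up independently to $(2N+7M+4)^2$ with no parity constraint, with entries in $\{0,\pm\tfrac13\}$, and with polynomially many matrices. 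Since each $B_j$ is colinear to an $A_{n_j}$ already present, the solution set of the bilinear system is unchanged. This is also the reason Lemma~\ref{l-transf} multiplies the third constraint by all coordinates $y_j$, $j=0,\dots,N+2M$, rather than just one: it manufactures precisely this per-coordinate reservoir of rank-one fillers. Without supplying this (or an equivalent exact mechanism), your argument is incomplete at its decisive step.
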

\begin{proof}
We denote by $\{e_i\}_{0\leq i\leq N+2M}$ the standard basis vectors in $\cC^{N+2M+1}$.
We know from Lemma~\ref{l-transf}  that the system~\eqref{a-e6} admits a solution if and only if there is a pair of non-null vectors $x,y\in \cC^n$ satisfying
\begin{align}\label{a-e11}
\left\{\begin{array}{l}
 x^\top (e_0+p_{i}e_{k^1_i}+q_{i}e_{k^2_i}+p_{i}q_{i}e_{N+i})e_{N+M+i}^\top y=0,\quad i=1,\cdots,M\\
x^\top (e_{k_i^1}e_{k_i^2}^\top -e_0e_{N+i}^\top )y=0,\quad i=1,\cdots,M\\
x^\top (e_0+r_{i}e_{k^3_i}-e_{N+M+i})e_j^\top y=0,\quad i=1,\cdots,M, \quad j=0,\dots,N+2M\\
x^\top (e_ie_i^\top -e_0e_0^\top )y=0,\quad i=1,\cdots,N+M\\
 x^\top (e_i e_j^\top -e_je_i^\top )y=0,
\qquad  0\leq i<j\leq N+2M 
\end{array}\right.
\end{align}
The system~\eqref{a-e11} has $N+3M+(N+2M+1)(4M+N)/2$ bilinear equations. Let $m_0=N+3M+(N+2M+1)(4M+N)/2$ and denote by $\{A_i\}_{1\leq i\leq m_0}$ the matrices corresponding to the $m_0$ bilinear forms in~\eqref{a-e11}.  Recall that $(p_i)_i, (q_i)_i,(r_i)_i$ are sequences of numbers in $\{1,-1\}$.
Therefore we transformed
the system~\eqref{a-e6} to the following bilinear system:
\begin{align}\label{a-xAiy}
x^T A_i y=0,\enspace i=1,\dots,m_0\enspace,
\end{align}
where $A_i$ have entries in $\{0,1,-1\}$. We check the five lines in~\eqref{a-e11} and obtain that
$$
\begin{array}{ll}
\displaystyle\sum_{i=1}^{m_0} A_i ^*A_i = &\displaystyle\sum_{i=1}^M 4 e_{N+M+i}e_{N+M+i}^\top +\displaystyle\sum_{i=1}^M (e_{k^2_i}e_{k^2_i}^\top +e_{N+i}e_{N+i}^\top )\\
&+\displaystyle\sum_{i=1}^M\sum_{j=0}^{N+2M} 3e_je_j^\top +\displaystyle\sum_{i=1}^{N+M} (e_ie_i^\top +e_0e_0^\top )\\
&+\displaystyle\sum_{i<j} (e_je_j^\top +e_ie_i^\top )
\end{array}
$$
Therefore we have that
$$
\sum_{i=1}^{m_0}A_i^*A_i=\left(\begin{array}{llll}k_1 & & & \\  & k_2 & & \\  &  &\ddots &  \\ & & & k_n \end{array}\right)
$$
where $k_i \leq 2N+7M+4$ for all $1\leq i\leq n$. Remark that due to the third line of equations in~\eqref{a-e11}, for each $0\leq j\leq N+2M$, there is an integer $1\leq n_j\leq m_0$  such that
$$
A_{n_j}^*A_{n_j}=3e_je_j^\top .
$$
By letting $B_j={A_{n_j}}/{3}$ we get that:
$$
3 B_j^*B_j=e_je_j^\top.
$$
For all $1\leq j\leq n$ let $l_j=(2N+7M+4)^2-n_j$. 
Let $m=m_0+3\sum_{j=1}^n l_j$ and $\{V_i\}_{1\leq i\leq m}$ be the sequence of matrices containing $\{A_i\}_{1\leq i\leq m_0}$ 
and $3l_j$ times the matrix $B_j$ for all $1\leq j\leq n$.  
Then we have
$$
\sum_{i=1}^mV_i^*V_i=\sum_{i=1}^{m_0} A_i^*A_i+\sum_{j=1}^{n}3l_jB_j^*B_j=(2N+7M+4)^2 I_n.
$$
Since for all $1\leq j\leq n$, $B_j$ is co-linear to a matrix in $\{A_i\}_{i\leq m_0}$. The feasibility of the system
\begin{align}\label{a-xViy}
x^T V_i y=0,\enspace i=1,\dots,m
\end{align}
is equal to that of~\eqref{a-xAiy}. Thus the system~\eqref{a-e6} admits a solution if and only if there is a nonzero solution to~\eqref{a-xViy}.
\end{proof} 

We now prove Theorem~\ref{th-3SATtorandkone}.
\begin{proof}
Let $k^1_i, k^2_i, k^3_i \in \{1,\dots, N\}$, $p_{i},q_{i},r_{i}\in\{\pm 1\}$ and $k^1_i\neq k^2_i $ for all $1\leq i\leq M$ such that the system
\begin{align}\label{a-e122}
\left\{\begin{array}{l}
 (1+p_{i}x_{k^1_i})(1+q_{i}x_{k^2_i})(1+r_{i}x_{k^3_i})=0,\quad i=1,\cdots,M\\
x_i^2=1,\quad i=1,\cdots,N
\end{array}\right.
\end{align}
 corresponds to an instance of 3SAT problem with $N$ Boolean variables and $M$ clauses.
  By Lemma~\ref{l-rankone}, we can construct in polynomial time (with respect to $N$ and $M$) a sequence of $n\times n$ matrices $\{V_i\}_{1\leq i\leq m}$ with entries in $\{0,\pm \frac{1}{l}, \pm \frac{1}{3l}\}$ where $l=(2N+7M+4)$ such that there is a solution to~\eqref{a-e122} if and only if there is a nonzero solution to the bilinear system~\eqref{a-xViy}.
%$$
%x^TV_iy=0,\enspace i=1,\dots,m\enspace.
%$$
Besides, the matrices $\{V_i\}_{1\leq i\leq m}$ satisfy~\eqref{e-unital}.
%$$
%\sum_{i=1}^m V_i^*V_i=I_n\enspace.
%$$
\end{proof}
We deduce the complexity of Problem~\ref{pb-pos} from Theorem~\ref{th-3SATtorandkone} and Lemma~\ref{l-stp}.
\begin{theo}\label{theo-decicompo}
 Deciding whether a Kraus map  is strictly positive (Problem~\ref{pb-pos})
is NP-hard.
\end{theo}
%\if{
%\todo[inline]{SG: did'nt we say this already in the previous remark?  then we could delete the next remark---Zheng: It is not the same thing. In Remark~\ref{rem-HillarLim},
%we say that our result is not a direct corollary of theirs. Here we want to make sure that the reduction to 3-Colorability problem as they do does not lead to a unital bilinear system so that a dif and only iferent 
%approach is necessary. SG: Yes, but there is a big overlap between the two remarks, we are making the point in both remarks they the unital constraint is important, I prefered to merge in a single remark.}}\fi
\begin{rem}\label{rem-HillarLim}
%[Known result] 
Hillar and Lim~\cite{HilarLim} obtained the NP-hardness of Problem~\ref{pb-feasi-g} by reducing graph 3-Colorability problems to it. Let $\{W_1,\dots,W_m\}\subset \cC^{n\times n}$ be arbitrary matrices and consider the bilinear system:
$$
x^T W_i y=0,\enspace \forall i=1,\dots,m\enspace.
$$
Let $U\in \cC^{n\times n}$ be any matrix such that
\begin{align}\label{a-WiWiU}
\sum_{i=1}^m W_i^* W_i=U^*U\enspace.
\end{align}
If $U$ is not invertible, than the intersection of the null spaces of $\{W_1,\dots,W_m\}$ is not empty and the latter bilinear system is clearly feasible. If $U$ is invertible, than the latter bilinear system is feasible if and only if
 the following bilinear system is feasible:
$$
x^T W_i U^{-1} y=0,\enspace \forall i=1,\dots,m\enspace.
$$
Hence every instance of Problem~\ref{pb-feasi-g}  can be reduced to an instance of Problem~\ref{pb-feasi} by computing the matrix $U\in \cC^{n\times n}$ satisfying~\eqref{a-WiWiU}.
However, in general such a matrix $U$ does not have rational entries. Therefore, it is not obvious to deduce the complexity of Problem~\ref{pb-feasi} in the bit model from the NP-hardness of bilinear feasibility.
In this respect, the proof of Theorem~\ref{th-3SATtorandkone} should be compared with the one of Hillar and Lim~\cite{HilarLim} proving the latter result.
%NP-hardness of bilinear feasibility.
In order to reduce a 3-Colorability problem to a bilinear system, they use 
cubic roots of the unity to encode the three colors. Some auxiliary variables are also introduced in order to obtain a homogeneous system. However, their construction does not allow to obtain in polynomial time matrices satisfying the constraint~\eqref{e-unital}.
\end{rem}

\if and only ifalse
We also obtain the following complexity results from the above reasoning.
\begin{theo}\label{th-biliNPH}
Let the input be a finite set of matrices $\{V_1,\dots V_m\} \in \cC^{n\times n}$. The following problems are $NP$-hard:
\begin{itemize}
\item Checking if there is a nonzero solution $x\in \cC^n$ to the system
$$x^\top V_i x=0 \enspace,\enspace i=1,\dots,m\enspace.$$
\item Checking if there is a nonzero solution $x,y\in \cC^n$ to the bilinear system
$$x^\top V_i y=0 \enspace,\enspace i=1,\dots,m\enspace.$$
\item Checking if there is a rank one matrix in the orthogonal complement of the space generated by $\{V_1,\dots,V_m\}$.
\end{itemize}
\end{theo}

\fi

%%
%%The proof of Theorem 3.2 in~\cite{Shor} shows that deciding if $\alpha(\Phi)\geq 2$ is a QMA-complete problem. Recall that the class QMA is a quantum analogue of the deterministic complexity class NP. It follows that deciding if $\othernorm{\Phi}_H<1$ is QMA-complete.
%%%\sgcomment{which $\to$\sgchange{Recall that the class QMA is a quantum analogue of ...}}
%%%\sgcomment{rather ``Theorem 3.2 shows that the quantum clique problem is QMA-complete. Actually, its proof shows that the subproblem 'of checking whether $\alpha(\Psi)\geq 2$ is already QMA complete.'}
%%
%%%\sgcomment{There are both $\Psi$ and $\Phi$ in the latter para, should they be the same?}
%%

 \bibliographystyle{alpha}
\bibliography{../../bibliographe/biblio}
%\bibliography{biblio}
\end{document}